\documentclass[twocolumn,aps,prd,amsmath,amsfonts,nofootinbib]{revtex4-1}

\newcommand{\ud}{\mathrm{d}}

\newcommand{\bs}{\begin{split}}
\newcommand{\es}{\end{split}}

\usepackage{xcolor}
\usepackage{verbatim}
\usepackage{float}
\usepackage{color}
\usepackage[hyperindex,breaklinks]{hyperref}
\hypersetup{
colorlinks,
    linkcolor={black},
    citecolor={blue!50!black},
    urlcolor={blue!80!black},
    pdftitle={Quantum effective potential from discarded degrees of freedom},
    pdfauthor={Luke M. Butcher}}
\usepackage{slashed}
\usepackage{graphicx}
\usepackage{subfigure}
\usepackage{psfrag}
\usepackage{mathtools}
\usepackage{amsthm}
\usepackage{amsfonts}
\theoremstyle{plain}

\newtheorem*{Claim}{Claim}

\begin{document}

\title{Quantum effective potential from discarded degrees of freedom}
\author{Luke M. Butcher}
\email[]{lmb@roe.ac.uk}
\affiliation{Institute for Astronomy, University of Edinburgh, Royal Observatory, Edinburgh EH9 3HJ, United Kingdom}
\date{July 10, 2018}
\pacs{}

\begin{abstract}
I obtain the quantum correction $\Delta V_\mathrm{eff}= (\hbar^2/8m) [(1- 4\xi \frac{d+1}{d})(\mathcal{S}')^2 + 2(1-4\xi)\mathcal{S}'']$ that appears in the effective potential whenever a compact $d$-dimensional subspace (of volume $\propto \exp[\mathcal{S}(x)]$) is discarded from the configuration space of a nonrelativistic particle of mass $m$ and curvature coupling parameter $\xi$. This correction gives rise to a force $-\langle\Delta V_\mathrm{eff}'\rangle$ that pushes the expectation value $\langle x\rangle$ off its classical trajectory. Because $\Delta V_\mathrm{eff}$ does not depend on the details of the discarded subspace, these results constitute a generic model of the quantum effect of discarded variables with maximum entropy/information capacity $\mathcal{S}(x)$.
\end{abstract}

\maketitle

\section{Introduction}
It is often possible and desirable to ignore specific degrees of freedom of a system, and focus on those that remain. For example, consider a nonrelativistic  particle in a curved two-dimensional space
\begin{align}\label{tubemetric}
\ud s^2 &= \ud x^2 + [b(x)]^2\ud \phi^2, & &(x,\phi)\in \mathbb{R}\times [0,2\pi),
\end{align}
as illustrated in figure \ref{tube}. If the particle also encounters a potential $V_0(x)$ then its action is
\begin{align}\label{tubeS}
\mathcal{I}[x(t),\phi(t)]=\int \ud t \left[\frac{m}{2} \left(\dot{x}^2 +b^2 \dot{\phi}^2\right) -V_0\right],
\end{align}
giving rise to the following equations of motion:
\begin{align}\label{clasx}
m \ddot{x}&= m b'b \dot{\phi}^2 - V_0',
\\ \label{clasphi}
m b^2 \dot{\phi}&=p_\phi =\text{const.}
\end{align}
Now suppose we only wish to describe the behaviour of the $x$ coordinate of this particle -- perhaps $\phi$ is unobservable in practice, or happens to be irrelevant to whatever applications we have in mind. At the classical level, we can separate the $x$-motion from the $\phi$-motion as follows. Let us write the action (\ref{tubeS}) as 
\begin{align}\nonumber
\mathcal{I}[x(t),\phi(t)]&=\int \ud t \left[\frac{m}{2} \dot{x}^2  +\frac{1}{2mb^2}\left(mb^2 \dot{\phi} - p_\phi\right)^2 \right.\\\label{Asplit} 
&\quad\left. {} + \dot{\phi}p_\phi - \frac{p_\phi^2}{2mb^2} - V_0  \right],
\end{align}
and note that
\begin{align}\nonumber
&\frac{\delta}{\delta x(t)}\int \ud t \left[\frac{1}{2mb^2}\left(mb^2 \dot{\phi} - p_\phi\right)^2 +\dot{\phi}p_\phi \right]
\\\label{Asplit2} & =-\frac{b'}{mb^3}\left(mb^2 \dot{\phi} - p_\phi\right)^2 +  \frac{2b' \dot{\phi}}{b}\left(mb^2 \dot{\phi} - p_\phi\right),
\end{align}
which vanishes on the $\delta/\delta\phi$ equation of motion (\ref{clasphi}). If we only want to determine $x(t)$, we can therefore discard the second and third terms in (\ref{Asplit}) and work with the reduced action:
\begin{align}\label{reducedtubeS}
\mathcal{I}[x(t)]\equiv\int \ud t \left[\frac{m}{2}\dot{x}^2 -V_\mathrm{cl}\right],
\end{align}
where the (classical) effective potential is
\begin{align}\label{Veffclas}
V_\mathrm{cl}=V_0 + \frac{p_\phi^2}{2m b^2}.
\end{align}
The reduced action (\ref{reducedtubeS}) generates the correct equations of motion for $x$, consistent with substituting (\ref{clasphi}) into (\ref{clasx}), and allows us to treat the particle as though it were living in a reduced configuration space
\begin{align}\label{redmetric}
\ud s^2 &= \ud x^2,  & &x\in \mathbb{R}.
\end{align}
We no longer need to refer to $\phi$, and can think of $p_\phi$ as a  parameter of the system. For a concrete application of this formalism, recall Newtonian orbital mechanics: with $b(x)=x$, the metric (\ref{tubemetric}) describes a flat plane with radial coordinate $x$, and $V_\mathrm{cl}= V_0+p_\phi^2/2m x^2$ is the standard centrifugal potential. 

\begin{figure}[t]
\centering
\includegraphics[scale=.5]{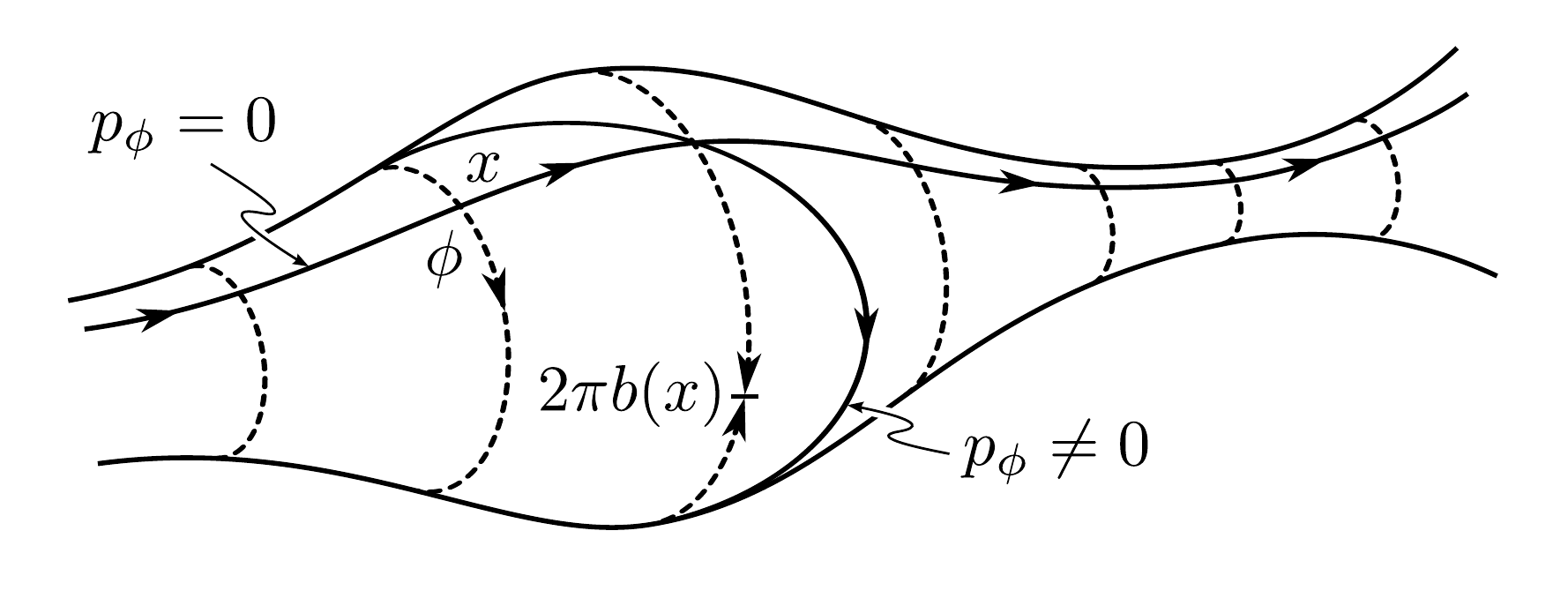}\vspace{-.5cm}
\caption{As classical particles move over the curved space (\ref{tubemetric}) their $x$ coordinate can be predicted without reference to $\phi$, using the reduced action (\ref{reducedtubeS}). However, quantum particles encounter an addition potential $\Delta V_\mathrm{eff}$ due to variations in the physical size $2\pi b$ of the discarded subspace $\phi\in [0,2\pi)$.}\label{tube}
\end{figure}

It is important to realise, however, that once quantum effects are considered, the above procedure is no longer valid. If we na\"ively quantize the reduced system (\ref{reducedtubeS}) we will not arrive at the correct result: that obtained by quantizing the original system (\ref{tubeS}) and \emph{then} reducing its configuration space. As we will see, the correct result differs from the na\"ive one by a quantum correction to the effective potential $\Delta V_\mathrm{eff}$, dependent on the physical size $\mathrm{Vol}_\phi=2\pi b(x)$ of the discarded subspace $\mathcal{M}_\phi\cong [0,2\pi)$. 

\section{Discarding a Single Variable}
Let us work in the Schr\"odinger picture, and first confirm the existence of $\Delta V_\mathrm{eff}$ for the simple system above. As usual, we describe the quantum particle with a wavefunction, a scalar field $\Psi(x,\phi,t)$ that defines coordinate-invariant probabilities via integrals of the form
\begin{align}\label{prob}
P = \int \ud x \,\ud\phi \sqrt{g}|\Psi|^2,
\end{align}
where $\sqrt{g}\equiv\sqrt{\det(g_{ij})}$ is the covariant measure endowed by the metric $g_{ij}$. [(\ref{tubemetric}) $\Rightarrow \sqrt{g}=b$.] In curved space, $\Psi$ obeys the covariant Schr\"odinger equation\footnote{The first systematic treatment of quantum mechanics in curved space is due to DeWitt \cite{DeWitt57} whose paper includes a canonical derivation of the covariant Schr\"odinger equation. For another perspective on the origin and ambiguity of the curvature term, see \cite{DeWitt-Morette80}. A more modern approach can be found in \cite{BLASZAK13}.}:
\begin{align}\label{covSchro}
i \hbar \partial_t \Psi =\left[\frac{\hbar^2}{2m}\left(-\nabla^2 + \xi R\right) + V_0\right]\Psi,
\end{align}
where the Laplacian
\begin{align}
\nabla^2 = \frac{1}{\sqrt{g}}\partial_i \sqrt{g}g^{ij}\partial_j
\end{align}
and the Ricci scalar $R\equiv R_{ij}g^{ij}\equiv R^{k}{}_{ikj}g^{ij}$ are constructed from the metric $g_{ij}$.\footnote{The curved tube (\ref{tubemetric}) is the \emph{entire} configuration space of the system, so the covariant Schr\"odinger equation (\ref{covSchro}) can refer only to the \emph{intrinsic} geometry of this manifold. Contrast this with a particle that actually exists in $\mathbb{R}^3$, but is constrained to a two-dimensional surface $\Sigma \subset \mathbb{R}^3$ by a steep potential well: here, the extrinsic curvature of $\Sigma$ will also play a role \cite{Jensen71,daCosta81,SILVA15,Ortix15,Wang17}. } [(\ref{tubemetric}) $\Rightarrow R=-2 b''/b$.] The form of (\ref{covSchro}) is fixed by coordinate invariance, unitarity, locality, dimensional considerations, and the limits $R\to0$, $V_0\to0$; however, the curvature coupling parameter $\xi\in \mathbb{R}$ is an arbitrary dimensionless constant, representing a quantization ambiguity of the system \cite{DeWitt57,DeWitt-Morette80,BLASZAK13}. One can choose to invoke `minimal coupling' $\xi=0$, or motivate a conformal coupling according to some other principle or consideration \cite{CALLAN70,Karamatskou14}. For the sake of generality, we leave $\xi$ unspecified.

Having quantized the original system, we proceed to discard the $\phi$ subspace. In order to make $p_\phi$ a parameter of the system, we must first insist that the particle be in an eigenstate of the angular momentum operator:
\begin{align}\label{eigenstates}
  \hat{p}_\phi \Psi\equiv - i \hbar \partial_\phi \Psi=  p_\phi \Psi.
\end{align} 
This requirement serves as the analogue of equation (\ref{clasphi}) and ensures that the particle's $\phi$ behaviour is sufficiently simple that the $x$ dynamics can be described in isolation. The states of interest are then
\begin{align}\label{Psix}
\Psi &= \frac{e^{i k\phi}}{\sqrt{2 \pi b(x)}} \Psi_x(x,t) , &p_\phi/\hbar= k&\in\mathbb{Z},
\end{align}
where the normalisation of $\Psi_x$ ensures that probabilities (\ref{prob}) become integrals of the form
\begin{align}\label{probx}
P = \int \ud x|\Psi_x|^2,
\end{align}
without any reference to the $\phi$ subspace. Hence we can think of $\Psi_x$  as the wavefunction of the particle on the reduced configuration space (\ref{redmetric}).

To obtain the evolution equation for $\Psi_x$, we simply insert (\ref{Psix}) into (\ref{covSchro}). We arrive at a reduced Schr\"odinger equation
\begin{align}\label{reducedSchro}
i \hbar \partial_t \Psi_x = \left[-\frac{\hbar^2}{2m}\partial_x^2 + V_\mathrm{qu}\right]\Psi_x,
\end{align}
where the \emph{quantum} effective potential  
\begin{align}\label{Veffquant}
V_\mathrm{qu}&\equiv V_\mathrm{cl} + \Delta V_\mathrm{eff}
\end{align}
has an additional contribution
\begin{align}\label{DeltaV}
\Delta V_\mathrm{eff} = \frac{\hbar^2}{2m} \left[-\frac{1}{4}\left(\frac{b'}{b}\right)^2 + \frac{1-4\xi}{2}\left(\frac{b''}{b}\right)\right],
\end{align}
as promised. There are a few things to note about this quantum correction. First, this effect is not purely a consequence of spatial curvature: even for the flat case $b=x$ we find $\Delta V_\mathrm{eff}\ne0$. Second, $\Delta V_\mathrm{eff}$ cannot be made to vanish identically by some choice of $\xi$. Third, $\Delta V_\mathrm{eff}$ does not depend on $p_\phi$, so all states (\ref{Psix}) experience the same correction.

To illustrate an important physical implication of $\Delta V_\mathrm{eff}$, we now consider an arbitrary state:
\begin{align}\label{arb2d}
\Psi &= \sum_{k=-\infty}^{\infty} \frac{e^{i k\phi}}{\sqrt{2 \pi b(x)}} \Psi^{k}_x(x,t),
\end{align}
where $ \int \ud x \sum _k |\Psi_x^k|^2=1$ ensures proper normalisation. The expectation value of a function $f(x,p_\phi)$ is then
\begin{align}\nonumber
\left\langle f(x,p_\phi)\right\rangle&\equiv \int \ud x\, \ud \phi \sqrt{g} \Psi^* f(x,-i\hbar \partial_\phi)\Psi\\\label{expdef}
&=  \int \ud x \sum_{k=-\infty}^{\infty} | \Psi_x^{k}|^2 f(x,\hbar k).
\end{align}
As each $\Psi_x^k$ obeys the reduced Schr\"odinger equation (\ref{reducedSchro}) with $p_\phi=\hbar k$, it follows that $\langle x\rangle$ evolves according to 
\begin{align}\label{meanEOM}
m\partial_t^2\langle x\rangle&=- \langle V'_\mathrm{cl} + \Delta V_\mathrm{eff}'\rangle .
\end{align}
We see that the classical equation of motion $m\ddot{x}=-V'_\mathrm{cl}$ no longer holds true on average. Indeed, the average deviation from the classical equation is given by 
\begin{align}\nonumber
m\partial_t^2\langle x\rangle+ \left\langle V'_\mathrm{cl}\right\rangle&= -\langle\Delta V_\mathrm{eff}'\rangle\\
&=- \int \ud x\Delta V'_\mathrm{eff}\sum_{k=-\infty}^\infty |\Psi_x^k|^2,
\end{align}
which only depends on $\Delta V'_\mathrm{eff}$ and the probability density over the reduced configuration space:
\begin{align}
\rho_x(x,t)\equiv \int \ud \phi \sqrt{g} |\Psi |^2= \sum_{k=-\infty}^\infty |\Psi_x^k|^2.
\end{align}
Hence $\Delta V_\mathrm{eff}$ directly influences the average motion of the particle, independent of the internal details of the quantum state.

Had we na\"ively quantized the reduced system (\ref{reducedtubeS}) we would not have included the quantum contribution (\ref{DeltaV}) to our effective potential. Moreover, nothing about the reduced action (\ref{reducedtubeS}) nor the configuration space (\ref{redmetric}) would have hinted at the error we were making -- we would encounter no striking technical difficulties or operator ambiguities. As such, this result serves as a general warning to those attempting to quantize \emph{any system} in which degrees of freedom have already been discarded: if the physical size of the discarded configuration space varies as a function of the remaining variables, then one expects to miss an effective potential similar to (\ref{DeltaV}). For instance, the mini-superspace approach to quantum cosmology \cite{Kiefer} will need to quantify the volume of configuration space neglected in assuming a highly symmetric universe.  

\section{Discarding a Generic Subspace}\label{DiscGen}
Thus far, we have obtained the quantum correction (\ref{DeltaV}) that arises from the removal of a one-dimensional subspace. To generalise this result, let us now consider a nonrelativistic quantum particle in $D=d+1$ spatial dimensions:
\begin{align}\label{genmetric}
\ud s^2 &= \ud x^2 + [b(x)]^2 \tilde{g}_{IJ}(\phi)\ud \phi^I \ud \phi^J,
\end{align}
where $\tilde{g}_{IJ}$ is the metric of a compact $d$-dimensional manifold $\mathcal{M}_\phi$ with coordinates $\phi\equiv(\phi^1,\ldots, \phi^d)$. As before, we seek a description of the dynamics in which we can ignore $\phi$ and treat the particle as though it were living in the reduced configuration space (\ref{redmetric}) with an effective potential that depends on a single parameter. In fact (\ref{genmetric}) is the most general metric that allows for this type of reduction: see the appendix for a proof.

At the classical level, the analysis follows steps (\ref{tubeS})--(\ref{Veffclas}) with minor modifications. In place of $p_\phi$, we assemble 
\begin{align}\label{clasE}
E_\phi\equiv\frac{1}{2m} \tilde{g}^{IJ}p_{\phi^I} p_{\phi^J}=\frac{mb^4}{2} \tilde{g}_{IJ}\dot{\phi}^I\dot{\phi}^J,
\end{align}
which is conserved by virtue of the $\delta/\delta\phi^I$ equations of motion. Writing $ |\dot{\phi}|\equiv (\tilde{g}_{IJ}\dot{\phi}^I\dot{\phi}^J)^{1/2}$, we therefore have
\begin{align}
mb^2 |\dot{\phi}|=\sqrt{2 m E_\phi} = \mathrm{const},
\end{align}
as a substitute for equation (\ref{clasphi}). Hence steps (\ref{Asplit}) and (\ref{Asplit2}) now follow with replacements $\dot{\phi}\to |\dot{\phi}|$, $p_\phi\to\sqrt{2mE_\phi}$. This generates the reduced action (\ref{reducedtubeS}) with
\begin{align}\label{genclasV}
V_\mathrm{cl}=V_0 + E_\phi/b^2
\end{align} 
as the classical effective potential.

On the quantum side, we begin by observing that the Laplacian and Ricci scalar can be decomposed as follows:
\begin{align} \label{decomp1}
\nabla^2&= b^{-d}\partial_x b^d\partial_x + b^{-2}\tilde{\nabla}^2,\\ \label{decomp2}
R&=d(1-d) (b'/b)^2 - 2 d (b''/b) + \tilde{R}/ b^{2},
\end{align}
where $\tilde{\nabla}^2$ and $\tilde{R}$ are constructed from the metric $\tilde{g}_{IJ}$. The states (\ref{Psix}) generalise to
\begin{align}\label{PsixD}
\Psi = \frac{\Phi(\phi)}{[b(x)]^{d/2}}  \Psi_x(x,t),
\end{align}
where $\Phi$ is an `energy' eigenfunction on $\mathcal{M}_\phi$,
\begin{align}\label{eigenD}
\frac{\hbar^2}{2m}\left(-\tilde{\nabla}^2  + \xi \tilde{R}\right)\Phi=E_\phi\Phi,
\end{align}
with unit norm:
\begin{align}\label{Nphi}
\int \ud^d\phi \sqrt{\tilde{g}} |\Phi|^2=1.
\end{align} 
It is natural to identify $E_\phi$, defined in (\ref{clasE}), with the eigenvalue of (\ref{eigenD}) because the covariant Schr\"odinger equation (\ref{covSchro}) follows the same quantisation rule $g^{ij} p_i p_j \to \hbar^2[-\nabla^2 +\xi R]$. As before, we have normalised $\Psi_x$ such that probabilities have the standard form (\ref{probx}).

Inserting (\ref{decomp1})--(\ref{eigenD}) into the Schr\"odinger equation (\ref{covSchro}) we obtain the reduced Schr\"odinger equation (\ref{reducedSchro}) once again. The quantum effective potential (\ref{Veffquant}) now differs from its classical counterpart (\ref{genclasV}) by
\begin{align}\nonumber
\Delta V_\mathrm{eff} &= \frac{\hbar^2 d}{2m} \left[ \left(\frac{d-2}{4} + \xi (1-d)\right)\left(\frac{b'}{b}\right)^2  \right.
\\\label{DeltaVD}
&\quad {} \left.+\frac{1-4\xi}{2}\left(\frac{b''}{b}\right)\right].
\end{align}
We see that $\Delta V_\mathrm{eff}$ is independent of $\tilde{g}_{IJ}$ and $E_\phi$, and does not vanish identically for any $(\xi,d)\in \mathbb{R}\times \mathbb{N}$.

It is easy to check that  $\Delta V_\mathrm{eff}$ appears in the average equation of motion (\ref{meanEOM}) just as before. The only change to this calculation is that the arbitrary state (\ref{arb2d}) is now
\begin{align}
\Psi &= \sum_{k} \frac{\Phi_k(\phi)}{ [b(x)]^{d/2}}  \Psi^k_x(x,t),
\end{align}
where the $\{\Phi_k\}$ are eigenfunctions (\ref{eigenD}) with eigenvalues $\{E_\phi^k\}$, forming an orthonormal basis over $\mathcal{M}_\phi$:
\begin{align}\label{orth}
\int \ud^d\phi \sqrt{\tilde{g}} \, \Phi^*_{k}\Phi_{k'} =\delta_{kk'}.
\end{align} 
Consequently, we compute expectation values with
\begin{align}\nonumber
\left\langle f(x,E_\phi)\right\rangle&\equiv \int \ud x\,\ud^d \phi \sqrt{g} \Psi^* f\left(x,\frac{\hbar^2}{2m}\left[-\tilde{\nabla}^2  + \xi \tilde{R}\right]\right)\Psi\\
&=  \int \ud x \sum_{k}| \Psi_x^{k}|^2 f(x,E^k_\phi),
\end{align}
instead of equation (\ref{expdef}).

\section{Discarded Information Capacity}
The power of equation (\ref{DeltaVD}) is revealed by expressing this result in terms of the \emph{information} we discard by ignoring the degrees of freedom in $\phi$. To quantify this information, we first need to regularise the infinite-dimensional Hilbert space of the particle. Let us imagine dividing the curved space (\ref{genmetric}) into a lattice of small cells with spacing $\ell\ll\min\{ b,(b/b'),\sqrt{b/b''}\}$. Then, at a given value of $x$, the particle can be in any of 
\begin{align}
\Omega(x)=  \frac{\mathrm{Vol}_\phi(x)}{\ell^d}\propto \frac{[b(x)]^d}{\ell^d}
\end{align}
locations on the $\phi$ sub-lattice, and the maximum entropy (or information) that can be stored in the $\phi$ subspace is given by Boltzmann's formula:
\begin{align}
\mathcal{S}&= \ln \Omega.
\end{align}
With this in mind, equation (\ref{DeltaVD}) can be written as
\begin{align}\label{VofS}
\!\Delta V_\mathrm{eff}&= \frac{\hbar^2}{8m}\left[\!\left(1- 4\xi \frac{d+1}{d}\right)\!(\mathcal{S}')^2 + 2(1-4\xi)\mathcal{S}''\right]\!.
\end{align}
Crucially, this formula is completely independent of the arbitrary length $\ell$. If we wish, we can now take $\ell\to 0$ and safely return to the continuum limit. Thus we have obtained a robust relation between the quantum correction and the maximum entropy/information capacity $\mathcal{S}$ of the discarded subspace.

\section{Semiclassical Action}
Because $\Delta V_\mathrm{eff}$ does not depend on the details of the discarded space $\mathcal{M}_\phi$, equation (\ref{VofS}) can be used to model discarded variables \emph{in general}. To illustrate this idea, suppose we are interested in predicting the behaviour of an observable $x$ in a system which is not well-understood, but is known to have the following two properties. First, the classical motion $x(t)$ can be derived from an action (\ref{reducedtubeS}) without reference to other dynamical variables. Second, for a given value of $x$, the system can store exactly $\mathcal{S}(x)$ nats of information. Now, even if we know nothing about the `discardable' degrees of freedom that hold the information, we can still model their effect on the quantum behaviour of $x$. The method is simple: treat the full configuration space as (\ref{genmetric}) and leave the internal metric $\tilde{g}_{IJ}$ unspecified. We conclude that the information-holding variables introduce a quantum correction (\ref{VofS}) to the effective potential, regardless of the details of their configuration space. Furthermore, we can account for this effect \emph{semiclassically} by replacing the classical action (\ref{reducedtubeS}) with
\begin{align}\label{semiclasS}
\mathcal{J}[x(t)]=\int \ud t \left[\frac{m}{2}\dot{x}^2 -\left(V_\mathrm{cl} + \Delta V_\mathrm{eff}\right)\right],
\end{align}
which generates equations that accurately capture the motion (\ref{meanEOM}) of the expectation value $\langle x\rangle$. Provided $\mathcal{S}(x)$ is known, the semiclassical action (\ref{semiclasS}) only introduces two parameters $(\xi,d)$ that would need to be determined experimentally. Thus equations (\ref{VofS}) and (\ref{semiclasS}) constitute a powerful semiclassical model of the quantum effect of discarded variables.

The semiclassical action (\ref{semiclasS}) also lets us express the propagator for $\Psi_x$ as a path integral over the reduced configuration space:
\begin{align}\label{prop}
K(x_f,x_0;T)=\int_{x(0)=x_0}^{x(T)=x_f} \mathcal{D}x(t)\, e^{i \mathcal{J}[x(t)]/\hbar}.
\end{align}
This relation is evident from the reduced Schr\"odinger equation (\ref{reducedSchro}) and the standard path integral construction \cite{Feyn48,Kleinert}.  Furthermore, it must also be possible to derive the reduced propagator (\ref{prop}) from the path integral over the \emph{entire} configuration space $\int \mathcal{D}x(t)\,\mathcal{D}^d\phi(t)\ldots$, by `integrating out' the paths in $\phi$.\footnote{There is some ambiguity in the definition of this path integral, each resolution of which fixes the value of $\xi$ in the effective potential (\ref{VofS}). Storchak has performed a similar treatment of the \emph{phase space} path integral with $\xi=0$ \cite{Storchak93}.} In a future publication, I will demonstrate this process explicitly, and hence provide a derivation of $\Delta V_\mathrm{eff}$ and $\mathcal{J}[x(t)]$ that does not require the Schr\"odinger equation.

\section{Conclusions}
When predicting the classical motion of an observable $x$, other degrees of freedom can often be ignored and subsumed into an effective potential. However, if these discarded variables have a configuration space which varies in size as a function of $x$, so that the discarded information capacity is $\mathcal{S}(x)$, then quantum effects generate an additional term in the effective potential (\ref{VofS}). This quantum correction directly influences the observable behaviour, forcing the expectation value $\langle x\rangle$ away from its classical trajectory (\ref{meanEOM}). The semiclassical action (\ref{semiclasS}) accounts for this phenomenon within the equation of motion, and also generates the path integral propagator (\ref{prop}) over the reduced configuration space (\ref{redmetric}). In general, the quantum correction is determined by $\mathcal{S}(x)$, the number of discarded variables $d\in \mathbb{N}$, and the curvature coupling parameter $\xi\in \mathbb{R}$; beyond this, the details of the discarded configuration space are irrelevant. As such, these results constitute a powerful general-purpose model for the quantum effect of discarded variables, applicable to any observable whose classical motion is determined by the standard nonrelativistic action (\ref{reducedtubeS}).

\section*{Acknowledgements}
The author is supported by a research fellowship from the Royal Commission for the Exhibition of 1851. He also thanks the anonymous reviewer whose comments motivated the appendix.

\appendix

\section{Reducible Spaces}
We have seen that the $x$-coordinate of a particle can sometimes be predicted without detailed knowledge of the other degrees of freedom: a \emph{single parameter} (e.g.\ $p_\phi$ or $E_\phi$) contains all the information we need. Here, I prove that the only manifolds that allow this kind of reduction are those with a metric of the form (\ref{genmetric}).

Let us begin with an arbitrary metric in $D=d+1$ spatial dimension:
\begin{align}\label{genmetric2}
\ud s^2 = g_{ij}(q)\ud q^i \ud q^j,
\end{align}
where the coordinates $q\equiv(q^1,\ldots,q^{d+1})\equiv(x,\phi^1,\ldots,\phi^d)$ cover an open region $\mathcal{U}\cong(x_-,x_+)\times\mathcal{U}_\phi$, with $\mathcal{U}_\phi$ homeomorphic to an open $d$-ball. A particle of mass $m$ moving in $\mathcal{U}$, under the influence of a potential $V_0$, will have a Hamiltonian
\begin{align}\label{geoH}
H= \frac{1}{2m}g^{ij}(q)p_i p_j + V_0(q),
\end{align}
where $g^{ij}$ is the inverse of $g_{ij}$, and $\{p_i\}$ are the momenta conjugate to $\{q^i\}$. 

We aim to discard the coordinates $\phi \equiv (\phi^1,\ldots,\phi^d)$ and momenta $p_\phi\equiv (p_{\phi^1},\ldots,p_{\phi^d})$ and predict the motion of $(x,p_x)$ from a one-dimensional Hamiltonian
\begin{align}\label{redH}
H = \frac{p_x^2}{2 \tilde{m}(\lambda)}  + V_\mathrm{cl}(x,\lambda),
\end{align}
where the effective mass $\tilde{m}$ and effective potential $V_\mathrm{cl}$ can depend on a single real parameter $\lambda=\lambda(x,p_x,\phi,p_\phi)\in \mathbb{R}$. (We take $\tilde{m}$, $V_\mathrm{cl}$, and $\lambda$ to be $C^1$ functions.) In order that $\lambda$ can be treated as a \emph{parameter of the system}, it must be conserved along every trajectory, and must not spoil the canonical equations of motion:
\begin{align}\label{refHeom}
\dot{x}&=\frac{\partial H}{\partial p_x}= \frac{p_x}{\tilde{m}},&
\dot{p}_x&=-\frac{\partial H}{\partial x}= -\partial_x V_\mathrm{cl}.
\end{align}
But note that (\ref{redH}) will generate (\ref{refHeom}) if and only if
\begin{align}\label{consistent}
 \frac{\partial H}{\partial \lambda} \frac{\partial \lambda}{\partial x} =\frac{\partial H}{\partial \lambda} \frac{\partial \lambda}{\partial p_x}   =0
\end{align}
everywhere. Thus, $H$ can depend on $\lambda$ only where $\lambda$ is independent of $(x,p_x)$. It is therefore safe to assume that $\lambda=\lambda(\phi,p_\phi)$, independent of $(x,p_x)$, without any loss in the generality of $H$. Consequently, $\lambda:T^*\mathcal{U}_\phi \to \mathbb{R}$ maps the cotangent bundle $T^*\mathcal{U}_\phi\cong\mathcal{U}_\phi\times \mathbb{R}^d$ onto
\begin{align}\label{lambdadef}
\Lambda \equiv \lambda\left(T^*\mathcal{U}_\phi\right)\subseteq \mathbb{R},
\end{align}
which represents the range of possible values for $\lambda(\phi,p_\phi)$.

Our task is to determine which geometries (\ref{genmetric2}) have Hamiltonians (\ref{geoH}) consistent with the reduced form (\ref{redH}). In other words, we require
\begin{align}\label{compareH1}
\frac{1}{2m}g^{ij}(q)p_i p_j + V_0(q)=\frac{p_x^2}{2 \tilde{m}(\lambda)}  + V_\mathrm{cl}(x,\lambda),
\end{align}
for all $(x,\phi,p_x,p_\phi)\in T^*\mathcal{U}$. Recalling $\lambda=\lambda(\phi,p_\phi)$, we can compare powers of $p_x$ in (\ref{compareH1}) and immediately extract
\begin{align}
g^{xx}(x,\phi)&= \frac{m}{\tilde{m}(\lambda(\phi,p_\phi))},& g^{x\phi}&=0.
\end{align}
But given that $g^{xx}$ does not depend on $p_\phi$, we must either have $\lambda=\lambda(\phi)$ or $\tilde{m}=\mathrm{const}$. Even in the former case we still have $\tilde{m}=\tilde{m}(\lambda(\phi)) =\mathrm{const}$, because $\lambda= \lambda(\phi)$ cannot be conserved along every trajectory unless it is actually independent of $\phi$. Thus $g^{xx}=m/\tilde{m}=\mathrm{const}$,  and by rescaling the $x$-coordinate we can always set $g^{xx}=1$. Equation (\ref{compareH1}) therefore reduces to
\begin{align}\label{compareH2}
\frac{1}{2m}g^{IJ}(x,\phi)p_{\phi^I} p_{\phi^J} + V_0(x,\phi)= V_\mathrm{cl}(x,\lambda(\phi,p_\phi)).
\end{align}
It will be useful to analyse this equation at a particular value of $x$, say $x=x_0\in (x_{-},x_+)$, and invert the function $f(\cdot)\equiv V_\mathrm{cl}(x_0,\cdot)$ on the right-hand side. To this end, we will prove the following statement:


\begin{Claim}
For $d\ge 2$, a function $f: \Lambda \to \mathbb{R}$ that satisfies
\begin{align}\label{feq}
f(\lambda(\phi,p_\phi))= \frac{1}{2m}g^{IJ}(x_0,\phi)p_{\phi^I} p_{\phi^J} + V_0(x_0,\phi),
\end{align}
will be invertible.
\end{Claim}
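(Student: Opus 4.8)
The plan is to prove that $f$ is strictly monotonic, which for a continuous function on an interval suffices for invertibility. First I would record two structural facts. Since $T^*\mathcal{U}_\phi\cong\mathcal{U}_\phi\times\mathbb{R}^d$ is connected (an open ball times a Euclidean space) and $\lambda$ is continuous, its image $\Lambda$ is an interval; and since $f=V_\mathrm{cl}(x_0,\cdot)$ inherits the $C^1$ regularity of $V_\mathrm{cl}$, the composite $f\circ\lambda$ on the left of (\ref{feq}) may be differentiated by the chain rule.

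The key step is to differentiate (\ref{feq}) with respect to the momentum $p_{\phi^K}$ at fixed $\phi$. Because the right-hand side is quadratic in $p_\phi$, this yields
\begin{align}\label{diffeq}
f'(\lambda)\,\frac{\partial \lambda}{\partial p_{\phi^K}} = \frac{1}{m}\,g^{KJ}(x_0,\phi)\,p_{\phi^J}.
\end{align}
The matrix $g^{KJ}$ is the inverse of a Riemannian metric, hence positive-definite and invertible, so the right-hand side of (\ref{diffeq}) vanishes if and only if $p_\phi=0$. Consequently $f'(\lambda(\phi,p_\phi))\neq 0$ at every point of $T^*\mathcal{U}_\phi$ for which $p_\phi\neq 0$.

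This is where the hypothesis $d\ge 2$ becomes essential: the punctured momentum space $\mathbb{R}^d\setminus\{0\}$ is connected precisely when $d\ge 2$, so $\mathcal{U}_\phi\times(\mathbb{R}^d\setminus\{0\})$ is connected and its image $\Lambda^{*}\equiv\lambda(\mathcal{U}_\phi\times(\mathbb{R}^d\setminus\{0\}))$ is a subinterval of $\Lambda$ on which the continuous function $f'$ never vanishes -- and therefore carries a single sign. Since every zero-momentum value $\lambda(\phi,0)=\lim_{p_\phi\to0}\lambda(\phi,p_\phi)$ lies in the closure of $\Lambda^{*}$, the interval $\Lambda^{*}$ is dense in $\Lambda$ and hence contains its interior. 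Thus $f'$ is nonzero and of one sign throughout the interior of $\Lambda$, making $f$ strictly monotonic and therefore invertible.

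I expect the main obstacle to be the treatment of the zero-momentum section, where (\ref{diffeq}) permits $f'$ to vanish and where the chain rule must be invoked at a possible endpoint of $\Lambda$; the density argument of the previous paragraph is what closes this gap. It is worth emphasising that the proof genuinely fails for $d=1$, exactly as the hypothesis demands: there $\mathbb{R}\setminus\{0\}$ splits into two disjoint rays, on which $f'$ may carry opposite signs, so $f$ need not be monotonic and the reduction breaks down.
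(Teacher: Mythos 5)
Your proof is correct, and while its first half coincides with the paper's (differentiate (\ref{feq}) in $p_{\phi^K}$, use positive-definiteness of $g^{IJ}$ to get $f'\ne 0$ off the zero section, and use connectedness of $\mathcal{U}_\phi\times(\mathbb{R}^d\setminus\{0\})$ for $d\ge 2$ to fix the sign of $f'$ there), the second half takes a genuinely different and shorter route. The paper still has to show that the single sign of $f'$ on $\Lambda_{p_\phi\ne 0}$ extends to all of $(\inf\{\Lambda\},\sup\{\Lambda\})$, and it does so constructively: for each $\lambda^\star\in\Lambda$ it builds an explicit path $\Gamma(s)=(\phi^\star,p^\star_\phi+e^s v)$, uses the quadratic growth of the right-hand side of (\ref{feq}) to force $f(\lambda(\Gamma(s)))\to\infty$, and then invokes monotonicity to conclude $\lambda(\Gamma(\mathbb{R}))\supseteq(\lambda^\star,\sup\{\Lambda\})$. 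You replace all of this with a purely topological observation: $\Lambda_{p_\phi\ne 0}$ is a connected (hence interval) subset of $\Lambda$ whose closure contains $\Lambda$ (by continuity of $\lambda$ at the zero section), and a dense subinterval necessarily contains the interior of the ambient interval; strict monotonicity on $\mathrm{int}(\Lambda)$ plus continuity then gives injectivity on all of $\Lambda$. Your version is cleaner and needs less structure -- it uses the quadratic form of the Hamiltonian only once, in the differentiation step, whereas the paper uses it a second time in (\ref{inflimit}) to drive the path to $\sup\{\Lambda\}$. What the paper's construction buys in exchange is explicit information about which portion of $\Lambda$ is swept out starting from any given $\lambda^\star$, though this extra information is not needed for the Claim. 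Your closing remark on $d=1$ also matches the paper's discussion of the disconnected case and the partial inverses $f^{-1}_{\pm}$.
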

\begin{proof}
Begin by differentiating (\ref{feq}) with respect to $p_{\phi^I}$:
\begin{align}
\frac{\ud f}{\ud \lambda} \frac{\partial \lambda}{\partial p_{\phi^I}}  = \frac{1}{m}g^{IJ}(x_0,\phi)p_{\phi^J}.
\end{align}
Thus $\ud f/\ud \lambda =0\ \Rightarrow\ p_\phi =0$, or equivalently
\begin{align}
\frac{\ud f}{\ud \lambda}&\ne 0& \forall\ \lambda\in \Lambda_{p_\phi\ne0}&\equiv\lambda\left(\left[T^*\mathcal{U}_\phi\right]_{p_\phi\ne 0}\right).
\end{align}
For $d\ge 2$, the set $\left[T^*\mathcal{U}_\phi\right]_{p_\phi\ne 0}$ is connected, and given that $\lambda$ is continuous, $ \Lambda_{p_\phi\ne0}$ will be connected also. Thus, by continuity of $\ud f/ \ud \lambda$ we must have
\begin{align}\label{monostrict}
\text{either}\quad \frac{\ud f}{\ud \lambda}&> 0\quad \forall\ \lambda\in \Lambda_{p_\phi\ne0},\\\label{monostrict2}
\text{or}\quad \frac{\ud f}{\ud \lambda}&< 0\quad \forall\ \lambda\in \Lambda_{p_\phi\ne0}.
\end{align} 
We will treat the first case only -- the second possibility can be dealt with in a similar fashion.

To proceed, we shall construct paths in $[T^*\mathcal{U}_\phi]_{p_\phi\ne 0}$ that explore all $\lambda(\phi,p_\phi)\in\Lambda$ (except perhaps the upper and lower bounds) and then appeal to monotonicity (\ref{monostrict}) to prove that $f$ is invertible. According to the definition (\ref{lambdadef}) for each $\lambda^\star\in \Lambda$ there is at least one $(\phi^\star,p^\star_\phi)\in T^*\mathcal{U}_\phi$ such that $\lambda(\phi^\star,p^\star_\phi)=\lambda^\star$.  So for every $\lambda^\star\in \Lambda$, we can always define a path
\begin{align}\label{path}
\Gamma : \mathbb{R} &\to \left[  T^{*}\mathcal{U}_{\phi }\right]  _{p
_{\phi }\ne 0},& \Gamma (s)&\equiv (\phi ^{\star },p^{\star }_{\phi }+
e^{s} v),
\end{align}
where we choose $v=(v_1,\ldots,v_d)\ne 0$ such that $v\cdot p^\star_\phi \ge 0$. Observe that the path (\ref{path}) has the following properties:
\begin{align}\label{pathimage}
\lambda(\Gamma(\mathbb{R}))&\subseteq \Lambda_{p_\phi\ne 0},
\\
\label{limlow}
\lim_{s\to -\infty} \lambda(\Gamma(s))&= \lambda^\star.
\end{align} 
Furthermore, equation (\ref{feq}) implies
\begin{align}\nonumber
\lim_{s\to\infty}f(\lambda(\Gamma(s))) &= \lim_{s\to\infty}\left\{  \frac{e^{2s}}{2m}g^{IJ}(x_0,\phi^\star)v_I v_J  + O(e^s)\right\}\\ \label{inflimit}
&=\infty.
\end{align}
But recall from (\ref{monostrict}) that $f$ is strictly increasing over $\Lambda_{p_\phi\ne0}$; hence (\ref{inflimit}) requires 
\begin{align}\label{sup1}
\lim_{s\to \infty}\lambda(\Gamma(s))= \sup \{\Lambda_{p_\phi\ne0}\}  \in \mathbb{R}\cup \{\infty\}.
\end{align}
Moreover, because $\Lambda$ is the closure of $\Lambda_{p_\phi\ne0}$ ($T^*\mathcal{U}_\phi$ is the closure of $[T^*\mathcal{U}_\phi]_{p_\phi\ne 0}$, and $\lambda$ is continuous) we have $\sup \{\Lambda_{p_\phi\ne0}\} =\sup \{\Lambda\}$ and hence
\begin{align}\label{limhi}
\lim_{s\to \infty}\lambda(\Gamma(s))=\sup \{\Lambda\}.
\end{align}
We can now assemble our results: recalling (\ref{pathimage}) and noting that the continuous function $\lambda(\Gamma(\cdot))$ has limits (\ref{limlow}) and (\ref{limhi}), we deduce
\begin{align}\label{pathimage2}
\Lambda_{p_\phi\ne 0}\supseteq\lambda(\Gamma(\mathbb{R})) \supseteq (\lambda^\star, \sup\{\Lambda\}).
\end{align}
(Note that the interval on the right is \emph{open}: $\Lambda_{p_\phi\ne 0}$ does not contain $\lambda^\star$ or $\sup\{\Lambda\}$ in general.) Combining (\ref{pathimage2}) with (\ref{monostrict}) we have
\begin{align}
\frac{\ud f}{\ud \lambda} > 0\quad \forall\ \lambda\in (\lambda^\star, \sup \{\Lambda\}),
\end{align}
but as this is true for all $\lambda^\star\in \Lambda$, we conclude that
\begin{align}
\frac{\ud f}{\ud \lambda} > 0\quad \forall\ \lambda\in (\inf\{\Lambda\}, \sup \{\Lambda\}).
\end{align}
Hence $f$ is invertible.
\end{proof}

For $d\ge 2$, we can therefore rewrite (\ref{feq}) as
\begin{align}\label{finverted}
\!\lambda(\phi,p_\phi)= f^{-1}\!\!\left(\!\frac{1}{2m}g^{IJ}(x_0,\phi)p_{\phi^I} p_{\phi^J} + V_0(x_0,\phi)\!\right)\!,\!
\end{align}
and substitute this into equation (\ref{compareH2}):
\begin{align}\nonumber
&\frac{1}{2m}g^{IJ}(x,\phi)p_{\phi^I} p_{\phi^J} + V_0(x,\phi)\\\nonumber
&= V_\mathrm{cl}\left(x,f^{-1}\!\left(\frac{1}{2m}g^{IJ}(x_0,\phi)p_{\phi^I} p_{\phi^J} + V_0(x_0,\phi)\right) \right)\\ \label{Vnew}
&\equiv \tilde{V}_\mathrm{cl}\left(x,\frac{1}{2m}g^{IJ}(x_0,\phi)p_{\phi^I} p_{\phi^J} + V_0(x_0,\phi)\right).
\end{align}
But the only way this equation can hold for all $p_\phi$ is if the function on the right is a first-order polynomial in its second argument: $\tilde{V}_\mathrm{cl}(x,y)=  \alpha(x) + \beta(x)\times y$ for some $\alpha, \beta : (x_-,x_+)\to \mathbb{R}$. Thus (\ref{Vnew}) becomes
\begin{align}\label{linearV}
&\frac{1}{2m}g^{IJ}(x,\phi)p_{\phi^I} p_{\phi^J} + V_0(x,\phi)\\ \nonumber
&= \alpha(x) + \beta(x)\left(\frac{1}{2m}g^{IJ}(x_0,\phi)p_{\phi^I} p_{\phi^J} + V_0(x_0,\phi)\right),
\end{align}
and comparing the coefficients of $p_{\phi^I} p_{\phi^J}$, we see that
\begin{align}\label{gform}
g^{IJ}(x,\phi)&=\beta(x) g^{IJ}(x_0,\phi).
\end{align}
Of course, the metric on the left cannot depend on our choice of $x_0$, so neither can the combination on the right.\footnote{The function $\beta(x)$  implicitly depends on the choice of $x_0$ because $f(\cdot)\equiv V_\mathrm{cl}(x_0,\cdot)$ was used to construct $\tilde{V}_\mathrm{cl}$. This implicit dependence must cancel the explicit dependence of $g^{IJ}(x_0,\phi)$ in (\ref{gform}).} Hence we can write (\ref{gform}) as
\begin{align}
g^{IJ}(x,\phi)&= [b(x)]^{-2}\tilde{g}^{IJ}(\phi),
\end{align}
and conclude that the metric has the form (\ref{genmetric}) as claimed.

We can also compare the terms in (\ref{linearV}) that are independent of $p_\phi$:
\begin{align}\nonumber
V_0(x,\phi)&= \alpha(x) + \beta(x)V_0(x_0,\phi)\\
&\equiv V_x(x) + [b(x)]^{-2}V_\phi(\phi),
\end{align}
revealing that the potential can be slightly more general than the $V_0(x)$ considered in the main text. It is easy to introduce the new term $[b(x)]^{-2}V_\phi(\phi)$ into the analysis of section \ref{DiscGen}: one simply adds $V_\phi$ to the definition of $E_\phi$ at the classical level (\ref{clasE}) and the quantum level (\ref{eigenD}). This has no effect on the quantum correction (\ref{VofS}).

All that remains is to deal with the special case $d=1$. As $[T^*\mathcal{U}_\phi]_{p_\phi\ne 0}$ is disconnected, $\Lambda_{p_\phi >0}\equiv\lambda([T^*\mathcal{U}_\phi]_{p_\phi> 0})$ may not connect to $\Lambda_{p_\phi <0}\equiv\lambda([T^*\mathcal{U}_\phi]_{p_\phi< 0})$ and we might have
\begin{align}
\frac{\ud f}{\ud\lambda} &>0\quad \forall\  \lambda \in \Lambda_{p_\phi >0},
\\
\frac{\ud f}{\ud \lambda} &<0\quad \forall\  \lambda \in \Lambda_{p_\phi <0},
\end{align}
or vice versa. In this case, $f$ will not be invertible, but we can still define partial inverses $f_+^{-1}: f(\Lambda_{p_\phi\ge 0})\to \Lambda_{p_\phi\ge 0}$ and $f_-^{-1}: f(\Lambda_{p_\phi\le 0})\to \Lambda_{p_\phi \le0}$ using the same arguments as $d\ge 2$. We would then write 
\begin{align}
\lambda(\phi,p_\phi) &= f^{-1}_{\mathrm{sign}(p_\phi)}(f(\lambda(\phi,p_\phi)))\\ \nonumber
&=f^{-1}_{\mathrm{sign}(p_\phi)}\!\left(\frac{1}{2m}g^{IJ}(x_0,\phi)p_{\phi^I} p_{\phi^J} + V_0(x_0,\phi)\right),
\end{align}
instead of (\ref{finverted}) and use this in (\ref{Vnew}). Thus, $\tilde{V}_\mathrm{cl}$ becomes $\tilde{V}^{\mathrm{sign}(p_\phi)}_\mathrm{cl}$, but seeing as the left-hand side of equation (\ref{Vnew}) is invariant under $p_\phi \to - p_\phi$, we must have  $\tilde{V}^{+}_\mathrm{cl}=\tilde{V}^{-}_\mathrm{cl}$ anyway. From then on, the argument proceeds exactly as above.

\bibliography{Tube1}
\end{document}